\title{Barely Random Algorithms \\and Collective Metrical Task Systems}
\author{%
  Romain Cosson\\
  Inria, Paris\\
  \texttt{romain.cosson@inria.fr} \\
  \And
  Laurent Massoulié\\
  Inria, Paris\\
  \texttt{laurent.massoulie@inria.fr}
}
\newtheorem{theorem}{Theorem}[section]
\newtheorem{proposition}[theorem]{Proposition}
\newtheorem{corollary}[theorem]{Corollary}
\newtheorem{remark}[theorem]{Remark}
\newtheorem{lemma}[theorem]{Lemma}
\newcommand{\bz}{{\mathbf z}}
\newcommand{\be}{\boldsymbol e}
\newcommand{\bc}{\boldsymbol c}
\newcommand{\br}{\boldsymbol s}
\newcommand{\Acal}{\mathcal{A}}
\newcommand{\Dcal}{\mathcal{D}}
\newcommand{\Ocal}{\mathcal{O}}
\newcommand{\Pcal}{\mathcal{P}}
\newcommand{\Ucal}{{\mathcal U}}
\newcommand{\Xcal}{\mathcal{X}}
\newcommand{\brho}{\boldsymbol{\rho}}
\newcommand{\Ebb}{\mathbb{E}}
\newcommand{\Nbb}{\mathbb{N}}
\newcommand{\Rbb}{\mathbb{R}}
\newcommand{\bx}{{\boldsymbol x}}
\newcommand{\by}{{\boldsymbol y}}
\newcommand{\one}{\mathbbm{1}}
\newcommand{\oneb}{\boldsymbol{\mathbbm{1}}}
\newcommand{\OT}{{\normalfont \text{OT}}}
\newcommand{\Cost}{{\normalfont \text{Cost}}}
\newcommand{\OPT}{\text{OPT}}
\newcommand{\ceil}[1]{{\lceil #1 \rceil}}
\DeclareMathOperator*{\argmin}{arg\,min}
\begin{document}

\maketitle

\begin{abstract}%
We consider metrical task systems on general metric spaces with $n$ points, and show that any fully randomized algorithm can be turned into a randomized algorithm that uses only $2\log n$ random bits, and achieves the same competitive ratio up to a factor $2$. This provides the first order-optimal \textit{barely random} algorithms for metrical task systems, i.e., which use a number of random bits that does not depend on the number of requests addressed to the system. We discuss implications on various aspects of online decision making such as: distributed systems, advice complexity, and transaction costs, suggesting broad applicability. We put forward an equivalent view that we call \textit{collective metrical task systems} where $k$ agents in a metrical task system team up, and suffer the average cost paid by each agent. Our results imply that such a team can be $O(\log^2 n)$-competitive as soon as $k\geq n^2$. In comparison, a single agent is always $\Omega(n)$-competitive. 
\end{abstract}

\section{Introduction}
Recent progress on the competitive analysis of important online problems, such as the $k$-server problem, metrical service systems, and metrical task systems have relied on the introduction of continuous optimization methods, such as the online primal-dual framework \cite{buchbinder2009design,bansal2012primal,kevi2023primal}, and more recently, the online mirror descent framework \cite{bubeck2018k,bubeck2021metrical,bubeck2022shortest,bai2025unweighted}. By design, these methods assume that the online algorithm is provided with an infinite number of random bits. In this paper, we address the question of whether this requirement is inherent to the online problem or specific to the methods at hand, by focusing on the influential example of metrical task systems \cite{seiden1999unfair,blum2000decomposition,fiat2000better,bubeck2021metrical,coester2019pure}. 

To answer this question, we study the notion of barely random algorithm, introduced by \cite{reingold1994randomized} in the early days of competitive analysis of online algorithms: ``We call an algorithm that uses a bounded number of random bits regardless of the number of requests \textit{barely random}''. In the case of metrical task systems, we will observe that this notion is particularly fruitful, as it connects various aspects of online decision-making, such as collective systems, switching costs, and advice complexity. 

Metrical Task Systems (MTS) is a central problem in online algorithms and online learning \cite{ramesh2022movement,chen2020minimax,dinitz2022algorithms,goel2019beyond,buchbinder2012unified,daniely2019competitive} which recently attracted the interest of the community working on learning-augmented algorithms \cite{anand2021regression,chlkedowski2021robust,antoniadis2021learning,christianson2023optimal,antoniadis2023mixing}. In essence, the problem is similar to the classical learning setting of `prediction with expert advice' \cite{littlestone1994weighted}, with the key difference that there is a cost associated to switching between experts. We now describe the setting of metrical task systems \cite{borodin1992optimal} in a way that highlights the role played by the source of randomness.

\textit{Metrical Task Systems.} The problem is defined on a metric space $\Xcal = (X,d)$ where $X$ is a finite set of cardinality $|X|=n$. The input is a sequence of cost vectors $\bc(\cdot) = (\bc(t))_{t\in \Nbb}\in (\Rbb_+^X)^\Nbb$ and the output is a sequence of states $\rho(\cdot) = (\rho(t))_{t\in \Nbb}\in X^\Nbb$.
The cost associated to each time $t\in \Nbb$ is the sum of the movement cost $d(\rho(t-1),\rho(t))$ and the service cost $c_{\rho(t)}(t)$. 
The cost over all time steps therefore writes as follows:
\begin{equation*}\Cost(\rho(\cdot),\bc(\cdot)) = \sum_{t\geq 1}d(\rho(t-1),\rho(t))+c_{\rho(t)}(t).
\end{equation*}
The offline benchmark, denoted by $\OPT(\bc(\cdot))$, is defined as the smallest achievable cost over all possible sequences of states, i.e. $\OPT(\bc(\cdot)) =\inf_{\rho(\cdot)\in \Xcal^\Nbb}\Cost(\rho(\cdot),\bc(\cdot))$. 
An online algorithm $\Acal$ is a method to define a time-consistent trajectory, i.e., such that the state at some time depends only on current and past information. It may use some source of randomness $\br$ (seed), which is sampled from a known probability distribution $\br\sim \Dcal$. 
Formally, the sequence of states $\rho_{\Acal}(\cdot)$ defined by algorithm $\Acal$ can be computed by $\forall t: \rho_{\Acal}(t) = \Acal(\bc(1),\dots,\bc(t),\br)$. 
The algorithm $\Acal$ is said to be $\alpha$-competitive, for some $\alpha\in\Rbb$, if it satisfies $\Ebb_{\br\sim \Dcal}(\Cost(\rho_{\Acal}(\cdot),\bc(\cdot))) \leq \alpha \times \OPT(\bc(\cdot)) + \beta$ for some fixed $\beta\in\Rbb$ and for any $\bc(\cdot)\in (\Rbb^X_+)^\Nbb$.  
In this paper, we are interested in the way the competitive ratio scales with the source of randomness $\Dcal$. When $\Dcal = \Ucal(\{0,1\}^b)$, or equivalently when $\Dcal = \Ucal(\{1,\dots, 2^b\})$, we say that the algorithm is provided with $b$ random bits. More generally, we consider the case where $\Dcal = \Ucal(\{1,\dots,k\})$ for  arbitrary $k\in \Nbb$.  In one extreme, the case of $k=1$ is known as the \textit{deterministic} variant of the problem and has been well understood since the inception of metrical task systems. In the other extreme, the case of $k=+\infty$ (more formally, $\Dcal = \Ucal(\{0,1\}^\Nbb)$) is known as the \textit{randomized} variant of the problem and was solved more recently. Any other value of $k\in \Nbb$ defines the $k$-\textit{barely random} variant, which is the focus of this paper. For reasons that will now become apparent, we will also refer to this setting as the $k$-\textit{collective} variant.  

\textit{Collective algorithm for metrical task systems.} Consider a team of $k\in\Nbb$ agents that are collectively confronted to a metrical task system. 
We can denote the positions of the agents by $\rho_1(t),\dots,\rho_k(t) \in X^k$ and their trajectory is given by $k$ deterministic online algorithms $\Acal_1,\dots,\Acal_k$. The participants collaborate in the sense that the cost paid by the team is the average cost paid by its members. 
It is immediate to observe that the $k$-\textit{collective} setting of metrical task systems is exactly the same problem as the $k$-\textit{barely random} setting defined above. This is because, given a $k$-collective strategy, one can define a $k$-barely random algorithm by using the seed $\br\sim \Ucal(\{1,\dots,k\})$ to sample uniformly at random one strategy to imitate. 
In the rest of the paper, we will thus use the terms \textit{collective} and \textit{barely random} indistinctively. The collective presentation reflects a variety of real-world situations where a finite team of agents is confronted with an adversarial environment. A simple biological illustration is in foraging, e.g., when a colony of $k$ ants is tasked to collectively gather a large amount of food (modeled by the service costs) from $n$ locations while limiting the collective energy spent (modeled by the movement costs) \cite{feinerman2012collaborative,guinard2021intermittent}.

\subsection{Main contributions} 
Our main technical result is the following,
\begin{theorem}[Section \ref{sec:main}]\label{th:main}
    For any metric space $\Xcal$ with $n$ points, for any $\epsilon>0$, for any integer $k\geq n^2/\epsilon$, if there exists a (fully) randomized MTS algorithm that is $\alpha$-competitive, then there exists a  $k$-barely random MTS algorithm that is $(1+\epsilon)\alpha$-competitive. 
\end{theorem}
The factor $2$ announced in the abstract can be readily recovered from this theorem by letting $\epsilon = 1$, which we can assume in the rest of this section for ease of presentation. Theorem \ref{th:main} relies on two techniques: a new equivalence between \textit{barely random} algorithms and \textit{barely fractional} strategies for metrical task systems that is presented in Section \ref{sec:setting} ; and a new discrete first-order method that is developed in Section \ref{sec:main}.

In light of the $\Ocal(\log^2 n)$-competitive randomized algorithm of \citet{bubeck2021metrical} for metrical task systems, Theorem \ref{th:main} has the following consequence:
\begin{corollary}[of Theorem \ref{th:main}]
There is a $\Ocal(\log^2 n)$-competitive algorithm for MTS that requires $\ceil{2\log n}$ random bits on any $n$ point metric.
\end{corollary}
We provide in Section \ref{sec:other} a simple lower bound that almost matches this result. We also provide a tight guarantee in the case of the uniform metric space.
\begin{proposition}[Section \ref{sec:other}]
    Any $\Ocal(\log^2 n)$-competitive algorithm for MTS on an $n$ point metric requires at least $\log n-\Ocal(\log \log n)$ random bits. In the uniform metric space, there is a $\Ocal(\log n)$-competitive algorithm that only requires $\ceil{\log n}$ random bits.
\end{proposition}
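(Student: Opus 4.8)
Both halves use the uniform metric space $\Xcal=(X,d)$ on $n$ points, $d(x,y)=1$ for $x\neq y$, as the witnessing space. Recall that a $k$-barely random algorithm is exactly a distribution $(p_1,\dots,p_k)$ over deterministic algorithms $A_1,\dots,A_k$, that ``$b$ random bits'' means $k=2^b$, and more generally that $\Dcal=\Ucal(\{1,\dots,k\})$ uses $\ceil{\log k}$ random bits.

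\textbf{Lower bound.} Against a fixed family $A_1,\dots,A_k$ I would run the cruel adversary applied to the whole collective: at time $t$, having simulated all the $A_i$ on the costs issued so far, let $S_t=\{\rho^{A_i}(t-1):i\in[k]\}$ be the set of at most $k$ currently occupied points and issue $c(t)$ with $c_x(t)=1$ for $x\in S_t$ and $0$ otherwise. Since the whole sequence $\bc$ is produced offline by this simulation, this is a legitimate oblivious adversary. Two observations finish it. (i) Any deterministic trajectory, and in particular each $A_i$, is charged at least $1$ at every step: at step $t$ it sits at a point of $S_t$, hence either stays and pays $c_x(t)=1$ or moves and pays $d\geq1$; therefore $\Ebb[\Cost(\brho,\bc)]=\sum_i p_i\,\Cost(\brho^{A_i},\bc)\geq\sum_i p_i T=T$ over $T$ steps. (ii) Over any window of $w=\floor{(n-1)/k}$ consecutive steps the charged set $\bigcup_t S_t$ has size at most $wk\leq n-1$, so some point is never charged in the window; the offline strategy that hops to such a point at each window boundary and sits there pays at most $1$ per window, so $\OPT(\bc)\leq T/w+O(1)$. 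For $k\leq n/2$ this is $\OPT(\bc)=O(Tk/n)+O(1)$, and letting $T\to\infty$ (which absorbs the additive slack in the definition of competitiveness) shows that the competitive ratio is $\Omega(n/k)$; the same conclusion holds trivially for $n/2<k<n$, and the statement of the proposition is vacuous once $k\geq n$. Consequently an $\Ocal((\log n)^2)$-competitive algorithm must have $k=\Omega(n/(\log n)^2)$, i.e. $b=\log_2 k\geq\log n-O(\log\log n)$.

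\textbf{Upper bound on the uniform metric.} I would start from an $\Ocal(\log n)$-competitive \emph{fractional} algorithm on $\Xcal$: a grid-discretized fractional marking, which keeps per-point cost counters that are reset at the end of each phase, calls a point saturated once its counter reaches $1$, and maintains the fractional state $y(t)$ as a point of $\Delta_n\cap\frac1n\Zbb^n$ supported on the unsaturated points and ``balanced'' among them (entries differing by at most $1/n$), resetting $y$ to $\frac1n\oneb$ when all points saturate. The standard marking bookkeeping gives fractional cost $\Ocal(H_n)=\Ocal(\log n)$ per phase --- movement roughly $\sum_s 1/s$ plus lower-order rebalancing and reset terms, service $\Ocal(H_n)$ --- against $\Omega(1)$ offline cost per phase. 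I would then round this online using a single random variable $u\sim\Ucal(\{1,\dots,n\})$, which is exactly $\ceil{\log n}$ random bits: view $y(t)$ as a configuration of $n$ indistinguishable tokens on $X$ (point $x$ holding $n\,y_x(t)$ of them), update the configuration between steps by moving the minimum number of tokens (with a fixed tie-breaking rule), and output the point holding token $u$. Then $\Pbb[\rho(t)=x]=y_x(t)$ exactly, so the expected service cost equals the fractional service cost; and $\Ebb_u[\1\{\rho(t)\neq\rho(t-1)\}]$ equals (tokens moved)$/n$, which by minimality equals $\tfrac12\|y(t)-y(t-1)\|_1$, the fractional movement. Summing over $t$, the expected cost of this integral $\ceil{\log n}$-bit algorithm is $\Ocal(1)$ times the cost of the fractional algorithm, hence $\Ocal(\log n)\cdot\OPT$.

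\textbf{Main obstacle.} In the lower bound the only point needing attention is the window/pigeonhole bound on $\OPT$, which is routine. The real work is in the upper bound: one must exhibit an $\Ocal(\log n)$-competitive fractional algorithm whose trajectory lies exactly on the $\frac1n$-grid --- so that the $n$-token representation is exact and one $\Ucal(\{1,\dots,n\})$ variable suffices for a movement-preserving rounding --- and verify that replacing ``uniform over the unsaturated points'' by its balanced integer rounding, together with the extra token rebalancing forced whenever a point saturates or a phase resets, perturbs the classical marking analysis only by constant factors. I expect this discretization bookkeeping to be the one genuinely delicate step; everything else follows the standard marking/coupling template.
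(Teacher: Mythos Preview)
Your proposal is correct. The upper bound is essentially the paper's approach: discretize Borodin--Linial--Saks marking onto the $1/n$-grid and use the $k$-barely fractional $\leftrightarrow$ $k$-barely random equivalence (your token coupling is exactly the paper's Section~\ref{sec:barelyfrac} reduction); the paper offloads the ``discretized marking is still $\Ocal(H_n)$ per phase'' bookkeeping to the balls-and-urns analysis of \cite{cosson2023efficient} (move each evicted token to a least-loaded urn), which is the same algorithm you describe and yields $2H_n+6$ without further work on your part.

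Your lower bound, however, takes a genuinely different route. The paper's argument is a one-liner: fix the realization $\br=1$, invoke the $2n-1$ deterministic lower bound to get an instance $\bc$ on which $\Acal(\cdot,1)$ pays $(2n-1)\OPT(\bc)$, and observe that this single realization already contributes $(2n-1)/k$ to the expectation. This is shorter and works on \emph{every} metric space, but needs the Borodin--Linial--Saks deterministic lower bound as a black box. Your explicit cruel adversary on the uniform metric is more self-contained and gives the same asymptotic $\Omega(n/k)$; it also makes the pigeonhole structure visible (at most $k$ occupied points per step, hence an uncharged refuge in any window of length $\lfloor(n-1)/k\rfloor$). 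One small terminological slip: in this paper a $k$-barely random algorithm uses $\Dcal=\Ucal(\{1,\dots,k\})$, so the mixture weights are $1/k$, not an arbitrary $(p_1,\dots,p_k)$; your argument is unaffected since you only use $\sum_i p_i=1$.
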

Finally, we observe that the notion of barely random algorithm is closely connected to the notion of advice complexity. Advice complexity was introduced by \cite{dobrev2008much} to measure the information content of online problems. In our context, one could ask the following question: What amount of information about the cost sequence suffices to improve over the $2n-1$ lower bound on the deterministic competitive ratio of metrical task systems? Since bits of advice are at least as useful as random bits (see \cite{komm2011advice} for more details), we have:
\begin{corollary}[of Theorem \ref{th:main}]\label{cor:advice}
    For any $n$ point metric, there exists a deterministic MTS algorithm that is $\Ocal(\log^2 n)$-competitive, using only $\ceil{2\log n}$ bits of advice.
\end{corollary}
This is a clear improvement over the result of \cite{fraigniaud2006collective}, which requires advice of size that scales with the length of the cost sequence.

\subsection{Other applications}
Our method is helpful in any setting where acquiring random bits is costly. In concrete implementations and real-world systems, we also expect that barely fractional configurations (which are the discrete objects manipulated in this paper) are generally much more practical to handle than fully fractional distributions (which are arbitrary real numbers). 

Our results also imply that a finite team of deterministic agents can be competitive in a way no single deterministic agent could be. This echoes the study of intelligence in multi-agent systems with concrete applications, such as dynamic power management \cite{antoniadis2021learning}. Consider, for example,
a computer with three power modes (\textit{on}, \textit{sleep}, \textit{off}) with a switching cost of $1$ between \textit{on} and \textit{sleep} and another switching cost of, say, $5$ between \textit{sleep} and \textit{off}. At each time $t$ the request is either that the computer is used and $\bc(t) = (1, +\infty, +\infty)$, i.e., the computer must be in state \textit{on}; or
the computer is not used and $\bc(t) = (1, 0.5, 0)$, i.e., the computer would rather be \textit{off} than \textit{sleep}, and \textit{sleep} than \textit{on}.
Consider the variant of the problem, where the computer is, in fact, composed of $k = 9$ components (e.g., screen, CPU cores, etc.) that can each individually switch between one of the three modes (\textit{on}, \textit{sleep}, \textit{off}). The energy spent by the computer is the sum of energy spent by its constituents. Since $n=3$ and $k > n^2$, our results imply that such a system can enjoy the \textit{randomized} competitive ratio, even if the cost sequence $\bc(\cdot)$ depends on the state of the system (\textit{deterministic} adversary). 

We also note that our work has a flavor that is similar to learning-augmented algorithms. Learning-augmented algorithms use predictions coming from a black-box algorithm in the hope of improving over the performance of competitive algorithms when predictions are accurate. In the same spirit, our method in Section \ref{sec:main} uses a fully fractional distribution coming from a black-box algorithm and tracks it via a proximal method.

\subsection{Related works} 
Metrical task systems is a problem introduced by \citeauthor{borodin1992optimal} (\citeyear{borodin1992optimal}) \cite{borodin1992optimal}. This initial work resolved the deterministic competitive ratio, which is of $2n-1$ for any metric space with $n$ points; as well as the randomized competitive ratio for the uniform metric space, which is in $\Theta(\log n)$. The randomized competitive ratio for general $n$ point metrics remained open until \citeauthor{bubeck2021metrical} (\citeyear{bubeck2021metrical}) proposed a $\Ocal(\log^2 n)$-competitive algorithm and \cite{bubeck2023randomized} obtained a matching lower-bound. These breakthroughs followed a long line of research (see e.g. \cite{seiden1999unfair,blum2000decomposition,fiat2000better,coester2019pure} and references therein). 

Metrical task systems is deeply connected (both in results and methods) to the classical online setting of ``learning with expert advice'', as observed by \cite{blum1997line}. The differences are  (1) there is a (movement) cost associated with switching from one expert to another (2) the offline benchmark is the best moving strategy as opposed to the best-fixed strategy (3) the goal is to obtain a multiplicative guarantee (competitive ratio) rather than an additive guarantee (regret).

The notion of barely random algorithms was introduced by \cite{reingold1994randomized} for the list update problem. In this paper, they present an algorithm that is $1.75$-competitive for the list update problem, using exactly $n$ random bits, where $n$ is the length of the list. Such algorithms are further studied in the celebrated book of \cite{borodin2005online}. Barely random algorithms were later investigated for paging \cite{komm2011advice}. The notion also had some echo in the literature on scheduling (see \cite{epstein2001randomized} and references therein) but it was not applied before to metrical task systems or related problems.

Advice complexity was first studied in the context of metrical task systems in \cite{emek2011online}. In this paper, the authors study the setting where $b$ bits of advice are provided at each time-step to a deterministic metrical task systems algorithm for $b$ of order $\log n$. The total amount of advice they require is, therefore, of order $B = T \log n$, where $T$ is the number of time steps. One surprising consequence of our results (Corollary \ref{cor:advice}) is that we present a deterministic algorithm that has a competitive ratio of 
$\Ocal(\log^2 n)$ using a single advice of total size $B = 2\log n$, i.e. which is independent of the number of time-steps $T$. We note that advice complexity is connected to learning-augmented algorithms when the advice is untrusted \cite{angelopoulos2020online}.

Collective algorithms originate from the literature on distributed algorithms. For example, in the problem of collective tree exploration, a team of $k$ agents is tasked to go through all edges of some unknown tree as fast as possible \cite{fraigniaud2006collective}. The connections observed in \cite{cosson2024collective} between this problem and randomized algorithms for metrical service systems provided some inspiration for the present paper. We note that the interpretation of a probability distribution as a continuous swarm of infinitesimally small agents is not novel, see e.g. \cite{bansal2022online}, and is directly connected to the so-called fractional formulation of metrical task systems. The novelty of our methods lies in the rigorous discretization of this fractional view, and leads to applications in collective and barely random algorithms. 
A surprising aspect of our method is that it does not rely on a tree-embedding of the metric space, but seems naturally adapted to the general setting, thanks to the very general properties coming from optimal transport, such as the Birkhoff-von Neuman theorem. 
The discretization also relies on a non-trivial (though concise) first-order method, which could be of interest for future works.

Finally, Section \ref{sec:main} of this paper is reminiscent of first-order optimization methods. Gradient descent of a function $f(\cdot)$ is also known as the explicit Euler method, where one can write its $t$-th iterate as the minimizer of the sum of a first-order approximation of $f(\cdot)$ at the $t-1$-th iterate and a quadratic cost. A closely related first-order method is the implicit Euler method where the $t$-th iterate is defined by, \begin{align*} \bx(t) = \arg\min_\bx f(\bx) +||\bx-\bx(t-1)||^2, \end{align*} which is also known as the proximal operator \cite{parikh2014proximal}. Our Equation \eqref{eq:potential-first} is reminiscent of this operator. One key difference is that we replace the squared norm by an optimal transport cost, and that the offline objective function $f(\cdot)$ is replaced by an online potential, like in the work function algorithm \cite{koutsoupias2009k}. Note that quite recently, the notion of Wasserstein proximal operator has appeared as a central object of study in the context of partial differential equations (see e.g. \cite{santambrogio2017euclidean}). A key difference with our Equation \eqref{eq:potential-first} is that we consider the Wasserstein-1 metric (aka, the earthmover distance) and not the Wasserstein-2 metric. 

\subsection{Notations, definitions, and preliminaries} 
In the following, $\Xcal =(X,d)$ is a finite metric space, i.e. $X$ is a finite set and the distance $d$ is positive, symmetric, and satisfies the triangle inequality. We denote by $n$ the cardinality of $X$. 

We denote by $\Pcal(\Xcal)$ the set of all distributions on $\Xcal$. For any two such distributions $\bx,\bx'\in \Pcal(\Xcal)$, we call the optimal transport cost from $\bx$ to $\bx'$ and we denote by $\OT(\bx,\bx')$ the quantity,
\begin{equation*}\OT(\bx,\bx') = \inf_{\pi\in \Gamma(\bx,\bx')} \sum_{\rho,\rho'}\pi(\rho,\rho')d(\rho,\rho')
\end{equation*}
where $\Gamma(\bx,\bx') = \{\pi \in [0,1]^{\Xcal\times\Xcal} : \forall \rho, \sum_{\rho'\in \Xcal}\pi(\rho,\rho')= x_\rho \text{ and } \forall \rho', \sum_{\rho}\pi(\rho,\rho') = x'_{\rho'}\} $ is the set of couplings from $\bx$ to $\bx'$. Classically, $\Gamma(\bx,\bx')$ can be viewed as the probability distributions on $\Xcal\times \Xcal$ with their marginals respectively equal to $\bx$ and $\bx'$. The optimal transport cost (also known as the Wasserstein distance) defines a distance between probability distributions on $\Xcal$. In particular, it satisfies positivity, symmetry, and the triangle inequality. 

For some point $\rho\in \Xcal$, we denote by $\be_\rho \in \Pcal(\Xcal)$ the probability distribution on $\Xcal$ that has all of its mass on $\rho$. Note that $\be_\rho$ can alternatively be seen as a vector of $\Rbb^\Xcal$. Denoting by $\otimes$ the Kronecker product (aka the outer product) between vectors of $\Rbb^\Xcal$, we have, for any $\rho,\rho'\in \Xcal$ that  $\be_\rho\otimes\be_{\rho'}$ is an optimal coupling from $\be_\rho$ to $\be_{\rho'}$

For an arbitrary constant $k\in \Nbb$ we denote by $\Pcal_k(\Xcal)$ the set of all distributions on $\Xcal$ that only take their values in $\{0,1/k,\dots, 1\}$. Such distributions will sometimes be called `configurations' to distinguish them from their continuous counterparts. For any two configurations $\bx,\bx'\in \Pcal_k(\Xcal)$, there exists an associated optimal coupling $\pi$ that only takes its values in $\{0,1/k,\dots, 1\}$, i.e. which follows the discrete formulation of optimal transport by Monge. We call such a (discrete) coupling, an optimal transport plan. The existence of this optimal transport plan be seen as a consequence of the Birkhoff-von Neumann theorem which states that the extreme points of the polytope of doubly stochastic matrices are permutation matrices \cite{birkhoff1946tres}. Indeed a $k\times k$ doubly stochastic matrix naturally induces a coupling between two $k$-barely fractional configurations, with values in $\{0,1/k,\dots, 1\}$ if it is a permutation matrix. The result then follows from the linearity of the objective.

For an integer $k\in \Nbb$, we denote by $\Ucal(\{1,\dots,k\})$ the uniform probability distribution over $\{1,\dots, k\}$. For an arbitrary distribution $\Dcal$, we denote by $\Ebb_{\br \sim \Dcal}(\cdot)$ the expectation when the variable $\br$ is sampled following $\Dcal$. All logarithms, denoted by $\log(\cdot)$, are in base $2$. 

\paragraph{Paper outline} Section \ref{sec:setting} provides an equivalence between $k$-collective strategies and $k$-barely fractional strategies, which is used in Section \ref{sec:main} to prove Theorem \ref{th:main}. Section \ref{sec:other} provides additional discussions, including the lower-bound as well as refined guarantees for the uniform metric space. The conclusion highlights some open directions toward making further connections between online algorithms and distributed/collective systems.

\section{Barely fractional strategies for metrical task systems}\label{sec:setting}
In the introduction, we gave the classical definition of metrical task systems \cite{borodin1992optimal}, while highlighting the role played by the source of randomness $\Dcal$. In the literature, a variant called the \textit{fractional} formulation is known to be equivalent to the problem when the algorithm is allowed an infinite number of random bits (i.e. when $\Dcal = \Ucal(\{0,1\}^\Nbb)$). We start by recalling this fractional formulation and the equivalence in Section \ref{sec:frac}. We then introduce the notion of $k$-\textit{barely fractional} strategies and we show their relevance to the aforementioned $k$-barely random setting in Section \ref{sec:barelyfrac}. In contrast with previous results, this equivalence relies on the Birkhoff-von Neumann theorem.

\subsection{(Fully) fractional strategies for metrical task systems}\label{sec:frac}

\textit{Fractional Metrical Task System.} In the fractional variant of metrical task systems, at any instant $t$, a fractional strategy denoted by $\bx(\cdot)$ maintains a distribution over the states that only depends on the information available before time $t$, i.e., $\bx(\bc(1),\dots,\bc(t))\in \Pcal(\Xcal)$, which shall be denoted by $\bx(t)$ hereafter. The cost paid by the strategy at time $t$ is defined as the sum of a transport cost $\OT(\bx(t-1),\bx(t))$ and a service cost $\sum_{\rho\in \Xcal} x_\rho(t)c_\rho(t)$. We denote by $\Cost(\bx(\cdot),\bc(\cdot))$ the total cost associated to the strategy, i.e. 
\begin{equation*}\Cost(\bx(\cdot),\bc(\cdot)) = \sum_{t\geq 1}\OT(\bx(t-1),\bx(t))+\sum_{\rho\in \Xcal} x_\rho(t)c_\rho(t).\end{equation*}
The interest of this fractional formulation comes from the following reduction, which is classical in the literature (see, e.g., \cite{bubeck2021metrical}).
\begin{proposition}For any fractional strategy for metrical task systems, there is a (fully) randomized algorithm that achieves the same cost and reciprocally.     
\end{proposition}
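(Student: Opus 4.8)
The plan is to prove the equivalence in two directions, both by explicit construction.

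\textbf{From fractional to randomized.} Given a fractional algorithm producing the sequence $\bx(0)=\be_{\rho_0}, \bx(1), \bx(2), \dots$, I would build a randomized algorithm by an \emph{online coupling} argument. At each step $t$ I pick an optimal transport plan $\pi_t \in \Gamma(\bx(t-1),\bx(t))$ realizing $\OT(\bx(t-1),\bx(t))$. The randomized algorithm maintains a state $\rho(t)$ whose marginal law is exactly $\bx(t)$; concretely, conditionally on $\rho(t-1)=\rho$, it moves to $\rho'$ with probability $\pi_t(\rho,\rho')/x_\rho(t-1)$. By construction the law of $\rho(t)$ is $\bx(t)$, so the expected service cost at time $t$ equals $\Ebb_{\rho\sim\bx(t)}(c_\rho(t))$, matching the fractional service cost exactly. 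For the movement cost, the expected value of $d(\rho(t-1),\rho(t))$ under this coupling is $\sum_{\rho,\rho'}\pi_t(\rho,\rho')d(\rho,\rho') = \OT(\bx(t-1),\bx(t))$. Summing over $t$ and using linearity of expectation gives $\Ebb(\Cost(\brho,\bc)) = \Cost(\bx,\bc)$. The only subtlety is time-consistency: since $\pi_t$ depends only on $\bx(t-1),\bx(t)$, which depend only on $c(1),\dots,c(t)$, the transition is a legitimate online randomized rule; I would make this explicit by exhibiting the random bits $\br$ as an i.i.d.\ sequence of uniforms on $[0,1]$ used to resolve each transition.

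\textbf{From randomized to fractional.} Conversely, given a randomized algorithm with state $\rho^{\Acal}(t) = \Acal(c(1),\dots,c(t),\br)$, define $\bx(t) \in \Pcal(\Xcal)$ to be the law of $\rho^{\Acal}(t)$ under $\br\sim\Dcal$, which depends deterministically on $c(1),\dots,c(t)$ as required. The expected service cost of the randomized algorithm at time $t$ is $\Ebb_{\br}(c_{\rho^\Acal(t)}(t)) = \Ebb_{\rho\sim\bx(t)}(c_\rho(t))$, matching the fractional one. For movement, the joint law of $(\rho^\Acal(t-1),\rho^\Acal(t))$ under $\br$ is some coupling $\pi \in \Gamma(\bx(t-1),\bx(t))$, so $\Ebb_\br(d(\rho^\Acal(t-1),\rho^\Acal(t))) = \sum_{\rho,\rho'}\pi(\rho,\rho')d(\rho,\rho') \geq \OT(\bx(t-1),\bx(t))$ by definition of optimal transport as an infimum over couplings. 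Hence $\Cost(\bx,\bc) \leq \Ebb_\br(\Cost(\brho^\Acal,\bc))$, so the fractional algorithm is at least as good. Combined with the first direction, this yields the stated equivalence (up to the inequality going the favorable way in each reduction, which is all that is needed for transferring competitive ratios).

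\textbf{Main obstacle.} The calculations here are routine; the one point that needs care is the \emph{online} nature of the coupling in the fractional-to-randomized direction — one must verify that the coupled trajectory can be sampled causally, i.e.\ that choosing $\pi_t$ as a function of the past does not require knowledge of future cost vectors, and that a single shared source of randomness $\br$ (rather than fresh independent randomness at each step, which would also be fine but conceptually heavier) suffices. I would handle this by noting that any transition kernel can be implemented from one uniform random variable, and packaging countably many such variables into $\br$. A secondary point worth a sentence is that the additive constant $\beta$ in the definition of $\alpha$-competitiveness is preserved verbatim since both reductions preserve the cost on every fixed input $\bc$ (the randomized one preserving expected cost exactly, the fractional one only decreasing it).
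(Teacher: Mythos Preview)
Your proposal is correct and follows essentially the same approach as the paper: define the fractional trajectory as the law of the randomized one (with the movement cost bounded below by $\OT$ since the joint law is a coupling), and in the other direction couple successive marginals via an optimal transport plan, sampling the next state from the induced transition kernel using one fresh uniform per step. The paper handles the causality/randomness-packaging point exactly as you suggest, by noting that $\Ucal(\{0,1\}^\Nbb)$ is equivalent to $\Ucal([0,1]^\Nbb)$ so that one fresh $r(t)\in[0,1]$ is available per step for inversion sampling.
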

\begin{proof}
    Consider a randomized algorithm  $\Acal(\cdot,\br)$ with $\br \sim \Ucal(\{0,1\}^\Nbb)$ for metrical task systems and define $\bx(\cdot)$ at time $t$ by setting $\forall \rho \in X: x_\rho(t) = \mathbb{P}_{\br}(\Acal(\bc(1),\dots,\bc(t),\br) = \rho)$. It is clear from this definition that the service cost of the fractional strategy equals the expected service cost of the randomized algorithm. Also, observe that the movement cost of the fractional strategy, which equals $\OT(\bx(t-1),\bx(t))$ at time $t$, must be less than the expected movement cost of the randomized algorithm, which uses a (possibly sub-optimal) coupling between the two consecutive distributions.

    Reciprocally, assume that we are given a fractional strategy $\bx(\cdot) \in \Pcal(\Xcal)^\Nbb$. We can design a randomized algorithm $\Acal(\cdot,\br)$ which relies on the random seed $\br \sim \Dcal(\{0,1\}^\Nbb)$. Observe that we can make the seemingly stronger (but equivalent, because $\{0,1\}^\Nbb$ is uncountable) assumption that $\Dcal = \Ucal([0,1]^{\Nbb})$. This reformulation allows to have access to one fresh (independent) real sampled uniformly at random from $[0,1]$ at each time-step $t\in \Nbb$, which we denote by $s(t)\in[0,1]$.  We assume that at time $t-1$, the algorithm is in some state $\Acal(\bc(1),\dots, \bc(t-1),\br) = \rho(t-1)$. 
    At time $t$, the algorithm considers the optimal transport plan $\pi$ associated to $\OT(\bx(t-1),\bx(t))$ and samples the state $\rho(t)$ following the probability distribution proportional to $(\pi(\rho(t-1),\rho))_{\rho\in \Xcal}$.
    This can be done by inversion sampling using only the random sample $s(t)\in \Ucal([0,1])$. It is clear by induction that the distribution of $\Acal(\bc(1),\dots,\bc(t),\br) =\rho(t)$ with this randomized algorithm follows exactly $\bx(t)$. 
    Further, its expected movement cost $\Ebb_{\br}(d(\rho(t-1),\rho(t)))$ 
    is exactly equal (by definition) to the transport cost $\OT(\bx(t-1),\bx(t))$. 
\end{proof}

\subsection{Barely fractional strategies for metrical task systems}\label{sec:barelyfrac}
\textit{Barely Fractional Metrical Task System.} We now introduce a discretization of the fractional formulation of metrical task systems. This will define a $k$-barely fractional strategy $\bx(\cdot)$. The definitions are the same as in the fully fractional variant above, except that at all times, the distribution $\bx(t)\in \Pcal(\Xcal)$ is constrained to belong to the set $\Pcal_k(\Xcal)$ of distributions taking values in $\{0,1/k,\dots,1\}$ (see the notations section).  We now show that this formulation enjoys an equivalence with the barely random variant of metrical task systems, when the source of randomness is limited to $\Dcal = \Ucal(\{1,\dots,k\})$.
\begin{proposition}\label{prop:reduction}
For any $k$-barely fractional strategy for metrical task system, there is a $k$-barely random algorithm that achieves the same cost, and reciprocally.
\end{proposition}
\begin{proof} 
We assume that we are given a $k$-barely random algorithm $\Acal$ and define as above the fractional strategy $\bx(\cdot)$ at time $t$ by $\forall \rho \in X: x_\rho(t) = \mathbb{P}_{\br}(\Acal(\bc(1),\dots,\bc(t),\br) = \rho)$. It remains true (see the argument of Section \ref{sec:frac}) that this fractional strategy has a smaller service and movement cost than the randomized algorithm. We also observe that since $\br \sim \Dcal=\Ucal(\{0,1/k,\dots,1\})$, we have $\bx(t)\in \Pcal_k(\Xcal)$. Therefore, $\bx(\cdot)$ is a valid $k$-barely fractional strategy with a smaller cost than $\Acal$. 

The reverse reduction uses an ingredient from the theory of optimal transport: the Birkhoff-von Neumann theorem, which states that the optimal transport between $\bx, \bx'\in \Pcal_k(\Xcal)$ always admits an optimal coupling $\pi$ taking its values in $\{0,1/k,\dots, 1\}$, that we call an optimal transport plan (see notations and preliminaries section). Given a barely fractional strategy $\bx(\cdot)$, we define a $k$-collective algorithm for which the population distribution will always match $\bx(t)\in \Pcal_k(\Xcal)$. The result then follows from the equivalence between collective and barely random algorithms discussed in the introduction. We initially distribute the $k$ agents following $\bx(0)$. Now, we assume as our induction hypothesis that the team at time $t-1$ is distributed exactly following $\bx(t-1)$ and we show that we can deterministically redistribute the members of the team to follow $\bx(t)$, while paying a moment cost of $\OT(\bx(t-1),\bx(t))$. Using the aforementioned Birkhoff-von Neumann theorem, there exists an optimal coupling between $\bx(t-1)\in \Pcal_k(\Xcal)$ and $\bx(t)\in \Pcal_k(\Xcal)$ that takes its values in $\{0,1/k,\dots,1\}$. Concretely, from the collective point of view, this means that we can move the agents from distribution $\bx(t-1)$ to distribution $\bx(t)$ using this coupling without having to split any agent. We use this coupling to deterministically reassign the agents to their new destination (we assume that agents have arbitrary identifiers and the ability to communicate in order to break the possible ties). Clearly, since the coupling is optimal, the average movement cost in the collective strategy will equal the transport cost of the fractional strategy, and since the population distribution of the agents is consistent with the barely fractional configuration, the service costs are equal.   
\end{proof}
\begin{remark}[Fractional metrical task systems, with fixed transaction costs] A straightforward illustration of the fractional variant of metrical task systems is in asset management. Consider an investor distributing its stakes on several assets via $\bx(t) \in \Pcal(\Xcal)$ where $\Xcal$ is a set of financial products such as stocks.
The distance between two products is defined by the transaction cost associated with exchanging a unit value 
between two products (i.e., the liquidity of the products varies). The service cost corresponds to the cost of holding a poorly performing asset. The movement cost is a variable cost that is proportional to the mass that is exchanged at any round. 
Another type of cost that could naturally arise is a \textit{fixed} transaction cost that the player should pay for converting any amount of stock $\rho$ into stock $\rho'$. We propose to set this cost equal to $\tau d(\rho,\rho')$ for some real constant $\tau>0$. In this setting, it becomes relevant for the player to perform transactions only if they involve a fractional mass greater than $\tau$. A direct consequence of our results (Theorem~\ref{th:main} and Proposition \ref{prop:reduction}) is that adding fixed transaction cost $\tau$ to the fractional variant of metrical task systems has little effect on its competitive ratio, provided that $\tau \leq 1/n^2$. 
\end{remark}

\section{Potential function method for barely fractional strategies}\label{sec:main}
In this section, we provide the proof of the main result of the paper, Theorem \ref{th:main}. The core of our method is in Algorithm \ref{alg:main} which provides a way to transform a (black-box) fully fractional strategy $\by(\cdot)$ in a $k$-barely fractional strategy $\bx(\cdot)$, while only losing a factor $2$ in the competitive ratio provided that $k\geq n^2$ (it is then generalized to arbitrary factor $1+\epsilon$ in the proof below). 
Following the reductions from Section~\ref{sec:setting}, this method also allows the transformation of a fully random algorithm into a $k$-barely random algorithm with similar competitive guarantees. 

Note that the cost of $\bx(\cdot)$ must be bounded by a constant times the cost of $\by(\cdot)$ if we hope to preserve the competitiveness of the latter. The informal reason why such a guarantee is difficult to obtain is that the adversary \textit{knows exactly} the discretization method used by the team (by adversary, we mean the designer of the cost sequence $\bc(\cdot)$, and by the team, we mean the swarm constituting $\bx(\cdot)$). 
For instance, the naive rounding strategy which would be $\bx(t) \in \argmin_{\bx \in \Pcal_k(\Xcal)}\OT(\by(t),\bx)$ does not preserve competitiveness. 
This is because infinitesimal movements of $\by(\cdot)$ could induce large (order $1/k)$ movements of $\bx(\cdot)$. A good discretization strategy must, therefore, display a hysteresis phenomenon in the sense that it does not undo a move right after that move was decided. This is precisely the idea behind our strategy in Equation \eqref{eq:potential-first} that employs a potential function $D(\cdot,\cdot)$ to ensure that $\bx(t)$ remains sufficiently close to $\by(t)$ and an additional term in $\OT(\bx(t-1),\bx(t))$ to limit the movement cost of $\bx(\cdot)$. We note that the technique presents some similarities with first-order optimization methods, which have been successful in several areas of online decision-making. One key novelty is that it is applied to a discrete space $\Pcal_k(\Xcal)$. 

\begin{algorithm}[H]
\caption{Barely fractional strategy for $k\geq n^2$}
\label{alg:main}
\KwIn{$\bx(t-1)$: Previous barely fractional configuration}
\KwIn{$\by(t)$: Next fully fractional distribution}
\KwOut{$\bx(t)$: Next barely fractional configuration}
\vspace{0.2cm}
Solve for $\bx\in \Pcal_k(\Xcal)$:
\begin{equation}\label{eq:potential-first}
\bx \in \argmin_{\bx \in \Pcal_k(\Xcal)} D(\bx, \by(t)) + \OT(\bx(t-1), \bx), 
\end{equation}
where $D(\bx,\by) = 2\OT\left(\bx/2 + 1/(2n) \oneb, \by\right)$, breaking ties by maximizing $\OT(\bx(t-1), \bx)$.
\end{algorithm}

\begin{theorem}\label{th: main2}
    Consider an abritrary $n$ point metric $\Xcal$, some real $\epsilon >0$ and an integer $k\geq n^2/\epsilon$. For any fractional MTS strategy $\by(\cdot)$ on $\Xcal$ which is $\alpha$-competitive, Algorithm~\ref{alg:main} with potential defined in \eqref{eq:potentialdef} provides a $k$-barely fractional MTS strategy $\bx(\cdot)$ on $\Xcal$ which is $(1+\epsilon)\alpha$-competitive. 
\end{theorem}

\begin{proof} We fix an integer $k\geq n^2/\epsilon$. Consider some  $\alpha$-competitive fractional strategy $\by(\cdot)$. We shall define (in an online manner) a $(1+\epsilon)\alpha$-competitive $k$-barely fractional strategy $\bx(\cdot)$, by performing an appropriate discretization of $\by(\cdot)$. 

For any pair of distributions $\bx,\by \in \Pcal(\Xcal)$ we call the potential between $\bx$ and $\by$ and denote by $D(\bx,\by)$ the quantity
\begin{equation}\label{eq:potentialdef}
D(\bx,\by) = (1+\epsilon)\OT\left(\frac{1}{1+\epsilon}\bx+\frac{\epsilon}{n(1+\epsilon)}\oneb,\by\right),
\end{equation}
where $\oneb$ denotes the vector of $\Rbb^\Xcal$ with all  its values equal to $1$. The correctness of this definition stems from the fact that we will always have $\bx/(1+\epsilon)+\epsilon/n(1+\epsilon)\oneb \in \Pcal(\Xcal)$, because $\bx\in \Pcal(\Xcal)$.

Upon the arrival of request $\bc(t)\in \Rbb_+^\Xcal$, the (fully) fractional distribution $\by(t)$ is computed from a black-box algorithm and the barely fractional configuration $\bx(t)$ is defined via Equation~\eqref{eq:potential-first}, 
\begin{equation*}
    \bx(t) \in \argmin_{\bx \in \Pcal_k(\Xcal)}D(\bx,\by(t)) +\OT(\bx(t-1),\bx),
\end{equation*}
where ties are broken in favor of the configuration $\bx$ which maximizes $\OT(\bx(t-1),\bx)$. 
We now aim to bound the movement and service costs of the aforedefined strategy $\bx(\cdot)$ in terms of the movement and service costs of the black-box strategy $\by(\cdot)$. 

\textit{Movement Costs.} We start with the movement cost of $\bx(\cdot)$. We decompose the change of the potential $P(t) = D(\bx(t),\by(t))$ as follows,
\begin{align*}
    P(t)-P(t-1) & =D(\bx(t),\by(t)) - D(\bx(t-1),\by(t-1)) \\
    &= D(\bx(t),\by(t))-D(\bx(t-1),\by(t))+D(\bx(t-1),\by(t))-D(\bx(t-1),\by(t-1)),\\
    &\leq -\OT(\bx(t-1),\bx(t))+(1+\epsilon))\OT(\by(t-1),\by(t)),
\end{align*}
where the first inequality $D(\bx(t),\by(t))-D(\bx(t-1),\by(t))\leq -\OT(\bx(t-1),\bx(t))$ comes from the optimality of $\bx(t)$ in \eqref{eq:potential-first} and the other inequality $D(\bx(t-1),\by(t))-D(\bx(t-1),\by(t-1))\leq (1+\epsilon)\OT(\by(t-1),\by(t))$ is a direct application of the triangle inequality for optimal transport.  We therefore have the at all times $t\geq 1$,
\begin{equation}\label{eq-ot}
    \OT(\bx(t-1),\bx(t))\leq P(t-1)-P(t)+(1+\epsilon)\OT(\by(t-1),\by(t)).
\end{equation}
Taking the (telescopic) sum with \eqref{eq-ot}, we obtain the desired bound on the movement cost of $\bx(\cdot)$,
\begin{align*}
\sum_{t\geq 1} \OT(\bx(t-1),\bx(t))
    &\leq \sum_{t\geq 1}  P(t-1)-P(t) +(1+\epsilon)\OT(\by(t-1),\by(t)) \\
    &\leq P(0)+(1+\epsilon)\sum_{t\geq 1} \OT(\by(t-1),\by(t)),
\end{align*}
where $P(0) = D(\bx(0),\by(0))$ is a constant bounded by the diameter of $\Xcal$. 

\textit{Service costs.} We shall now show that at all times $t\geq 1$, 
\begin{equation}\label{eq:serv}
    \forall\rho\in \Xcal : x_\rho(t) \leq (1+\epsilon) y_\rho(t).
\end{equation} 
This equation clearly implies that the service cost of $\bx(\cdot)$ is at most $(1+\epsilon)$ that of $\by(\cdot)$. We shall use the following equation from Lemma~\ref{lemma-definition} which states that at any time $t$, and for any configuration $\bx'\in \Pcal_k(\Xcal)$ such that $\bx'\neq \bx(t)$, 
\begin{equation}\label{eq:necessary-cond1}
     D(\bx(t),\by(t)) < D(\bx',\by(t)) +\OT(\bx(t), \bx').  
\end{equation}
Assume by contradiction that \eqref{eq:serv} is not satisfied. For some $\rho\in \Xcal$, we have $\frac{1}{1+\epsilon}x_\rho(t) + \frac{\epsilon}{n(1+\epsilon)} > y_\rho(t)+\frac{\epsilon}{n(1+\epsilon)}$. By applying Lemma~\ref{lem:flow} to the distributions $\bz = \frac{1}{1+\epsilon}\bx(t) +\frac{\epsilon}{n(1+\epsilon)} \one$ and $\by(t)$, with $\delta = \frac{1}{k(1+\epsilon)}\leq \frac{\epsilon}{n^2(1+\epsilon)}$, 
we obtain that there exists some $\rho'\in \Xcal$ such that $\bz' = \bz +\delta(\be_{\rho'}-\be_\rho)$ and satisfying $\OT(\bz,\by(t)) = \OT(\bz,\bz') +\OT(\bz',\by(t))$, where $\OT(\bz,\bz') = \delta d(\rho,\rho')$.
We then consider the configuration $\bx' = \bx(t)+\frac{1}{k}(\be_{\rho'}-\be_{\rho})$, which will provide the contradiction by forming a violation of \eqref{eq:necessary-cond1}. First note that $\bx' \in \Pcal_k(\Xcal)$ because $x_\rho(t) \geq 1/k$ since
$x_\rho(t)>(1+\epsilon)y_\rho(t)\geq 0$. Also, by definition $\bx' \neq \bx(t)$. Finally, $D(\bx(t),\by(t)) = (1+\epsilon)\OT(\bz,\by(t))=(1+\epsilon)\OT(\bz',\by(t))+(1+\epsilon)\delta d(\rho,\rho') = D(\bx',\by(t))+\frac{1}{k} d(\rho,\rho')=D(\bx',\by(t))+\OT(\bx(t),\bx')$, which provides the contradiction.

Overall, the movement and service costs of $\bx(\cdot)$ are bounded by $(1+\epsilon)$ those of $\by(\cdot)$ (plus a constant), thus, the $\alpha$-competitiveness of $\by(\cdot)$ implies the $(1+\epsilon)\alpha$-competitiveness of $\bx(\cdot)$.
\end{proof}

\begin{lemma}\label{lemma-definition}
At any time $t$, the configuration $\bx(t)$ that is selected by Equation \eqref{eq:potential-first} and the associated tie-breaking rule satisfies that for any $\bx'\in \Pcal_k(\Xcal)$ such that $\bx' \neq \bx(t)$, 
\begin{equation*}
     D(\bx(t),\by(t)) < D(\bx',\by(t)) +\OT(\bx(t), \bx').
\end{equation*}
\end{lemma}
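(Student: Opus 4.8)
The plan is to show that the minimizer $\bx(t)$ of the objective $g(\bx) := D(\bx,\by(t)) + \OT(\bx(t-1),\bx)$ over the finite set $\Pcal_k(\Xcal)$, with the stated tie-breaking rule, satisfies the strict inequality \eqref{eq:necessary-cond1}. First I would fix an arbitrary competitor $\bx' \in \Pcal_k(\Xcal)$ with $\bx' \neq \bx(t)$ and split into two cases according to whether $g(\bx') > g(\bx(t))$ or $g(\bx') = g(\bx(t))$ (the case $g(\bx') < g(\bx(t))$ being impossible by minimality).

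In the \emph{strict} case $g(\bx') > g(\bx(t))$, I would write out both sides: $D(\bx(t),\by(t)) + \OT(\bx(t-1),\bx(t)) < D(\bx',\by(t)) + \OT(\bx(t-1),\bx')$, and then combine with the triangle inequality for optimal transport, $\OT(\bx(t-1),\bx') \leq \OT(\bx(t-1),\bx(t)) + \OT(\bx(t),\bx')$. Subtracting the common term $\OT(\bx(t-1),\bx(t))$ yields exactly $D(\bx(t),\by(t)) < D(\bx',\by(t)) + \OT(\bx(t),\bx')$, which is \eqref{eq:necessary-cond1}. This case is routine.

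The case that requires the tie-breaking rule is the \emph{equality} case $g(\bx') = g(\bx(t))$. Here the tie-breaking rule says $\bx(t)$ is chosen among all minimizers to maximize $\OT(\bx(t-1),\cdot)$, so $\OT(\bx(t-1),\bx') \leq \OT(\bx(t-1),\bx(t))$. From $g(\bx')=g(\bx(t))$ we get $D(\bx',\by(t)) - D(\bx(t),\by(t)) = \OT(\bx(t-1),\bx(t)) - \OT(\bx(t-1),\bx') \geq 0$. To get the \emph{strict} inequality \eqref{eq:necessary-cond1} I need to rule out the possibility that $D(\bx',\by(t)) = D(\bx(t),\by(t))$ \emph{and} $\OT(\bx(t),\bx') = 0$ simultaneously; but $\OT(\bx(t),\bx') = 0$ forces $\bx' = \bx(t)$ (optimal transport is a genuine metric on $\Pcal(\Xcal)$), contradicting $\bx' \neq \bx(t)$. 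Hence $\OT(\bx(t),\bx') > 0$, and combined with $D(\bx',\by(t)) \geq D(\bx(t),\by(t))$ this gives $D(\bx(t),\by(t)) \leq D(\bx',\by(t)) < D(\bx',\by(t)) + \OT(\bx(t),\bx')$, as desired.

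The main (mild) obstacle is simply making sure the tie-breaking rule is invoked correctly so that the inequality comes out \emph{strict} rather than weak; once one observes that $\OT(\cdot,\cdot)$ separates distinct points of $\Pcal_k(\Xcal)$, both cases close immediately, and no computation beyond one application of the triangle inequality for $\OT$ is needed.
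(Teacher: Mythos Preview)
Your proof is correct and follows essentially the same approach as the paper: both combine the minimality of $\bx(t)$ in \eqref{eq:potential} with the triangle inequality for $\OT$, and then use the tie-breaking rule together with the fact that $\OT$ is a genuine metric to upgrade the weak inequality to a strict one. The only cosmetic difference is that the paper first derives the non-strict version of \eqref{eq:necessary-cond1} and then analyzes when equality could hold in both ingredient inequalities simultaneously, whereas you split upfront on whether $g(\bx')>g(\bx(t))$ or $g(\bx')=g(\bx(t))$; in the latter case your argument is in fact slightly more direct, since you conclude from $D(\bx',\by(t))\geq D(\bx(t),\by(t))$ and $\OT(\bx(t),\bx')>0$ without needing the triangle inequality a second time.
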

\begin{proof}
Recall that $\bx(t-1)$ is the configuration that immediately preceded $\bx(t)$. Consider some arbitrary configuration $\bx'\in \Pcal_k(\Xcal)$. Since $\bx(t)$ achieves the minimum of \eqref{eq:potential-first}, we have 
\begin{equation*}
D(\bx(t),\by(t))+\OT(\bx(t-1),\bx(t)) \leq D( \bx',\by(t))+\OT(\bx(t-1),\bx').
\end{equation*}
Using the triangle inequality of optimal transport, we also have 
\begin{equation*}\OT(\bx(t-1),\bx') \leq \OT(\bx(t-1),\bx(t))+ \OT(\bx(t),\bx').\end{equation*}
The combination of both inequalities proves a non-strict version of Equation \eqref{eq:necessary-cond1},
\begin{equation}\label{eq:loose_ineq}
    D(\bx(t),\by(t)) \leq D(\bx',\by(t)) +\OT(\bx(t),\bx').
\end{equation}
We now explain why this inequality is strict for any $\bx'\neq \bx(t)$. We note that equality can be met in \eqref{eq:loose_ineq} only if equality is also met in both of the aforementioned components of the inequality. We assume that is the case for some $\bx'$ and will now show that $\bx'=\bx(t)$. We have that,
\begin{equation*}
\begin{cases}
    D(\bx(t),\by(t))+\OT(\bx(t-1),\bx(t)) = D(\bx',\by(t))+\OT(\bx(t-1),\bx'),\\
    \OT(\bx(t-1),\bx') = \OT(\bx(t-1),\bx(t))+ \OT(\bx(t), \bx').
\end{cases}
\end{equation*}
By the tie-breaking rule of \eqref{eq:potential-first}, the first inequality implies that $\OT(\bx(t-1),\bx')\leq \OT(\bx(t-1),\bx(t))$ which allows to conclude, using the second equality that $\OT(\bx(t),\bx')\leq 0$, and thus that $\bx'=\bx(t)$. This relies on the positivity of the optimal transport distance, which follows from the positivity of the distance in the original metric space $\Xcal$.
\end{proof}

\begin{lemma}\label{lem:flow}
    Let $\bz\in \Pcal(\Xcal)$ and $\by\in \Pcal(\Xcal)$ be two distributions on a metric space $\Xcal$ with $n$ points. Assume that for some real $\delta>0$ and some point $\rho \in \Xcal$ one has $z_\rho \geq y_\rho + n\delta$. Then, there is another distribution $\bz'\in\Pcal(\Xcal)$, taking the form $\bz' = \bz + \delta(\be_{\rho'}-\be_{\rho})$ for some other point $\rho'\in \Xcal$, which satisfies $\OT(\bz,\by) = \OT(\bz,\bz') +\OT(\bz',\by)$, where $\OT(\bz,\bz') = \delta d(\rho,\rho')$.
\end{lemma}

\begin{proof} By the triangle inequality of the transport distance, for any three distributions $\bz,\bz',\by \in \Pcal(\Xcal)$, we have $\OT(\bz,\by) \leq \OT(\bz,\bz') +\OT(\bz',\by)$, therefore it will suffice to show the converse inequality. 

We consider an optimal coupling from $\bz$ to $\by$ denoted by $\pi = (\pi(\rho_1,\rho_2))_{\rho_1,\rho_2 \in \Xcal^2}$. By definition, $z_\rho -y_\rho = \sum_{\rho_1\in \Xcal}\pi(\rho,\rho_1)-\pi(\rho_1,\rho)\geq n\delta$. In this sum of $n$ terms, at least one term has to exceed $\delta$. This allows to define one point $\rho'\in \Xcal$ such that $\pi(\rho,\rho') \geq \delta$. 

Next we consider the distribution $\bz' = \bz + \delta(\be_{\rho'}-\be_\rho)$. First, observe that this distribution is well-defined since $z_\rho \geq \delta$. Then, we note that $\pi' = \pi+\delta (\be_{\rho'}\otimes\be_{\rho'}-\be_{\rho}\otimes\be_{\rho'})$ defines a coupling from $\bz'$ to $\by$ and thus, 
\begin{equation*}
\OT(\bz',\by)\leq \sum_{\rho_1,\rho_2} d(\rho_1,\rho_2) \pi'(\rho_1,\rho_2) = \OT(\bz,\by)-\OT(\bz,\bz'),\end{equation*}
which finishes the proof. 
\end{proof}

\section{Discussion, lower-bound and open directions}\label{sec:other}
\subsection{Uniform metric space}
The uniform metric space is a notoriously simple special case of metrical task systems. It is the metric space where all pairs of nodes are at the same distance $\forall \rho,\rho' \in \Xcal, \rho\neq\rho' \implies d(\rho,\rho')=1$. The (fully) randomized competitive ratio of metrical task systems on the uniform metric space is known to lie between $H_n$ and $2H_n$ \cite{borodin1992optimal}, where $H_n$ is the $n$-th harmonic number. 

In this section, we show that $\ceil{\log n}$ bits of randomness are sufficient to achieve a constant factor optimal competitive ratio on the uniform metric space. The proof can be seen as a direct discretization of the algorithm of \cite{borodin1992optimal}, and as an application of the `game of balls and urns' in \cite{cosson2023efficient}.

\begin{proposition}
    There is a $n$-barely random and $2H_n+6$-competitive algorithm for metrical task systems on the uniform metric space of cardinality $n$.
\end{proposition}
\begin{proof} Following \cite{borodin1992optimal}, we decompose the time in discrete phases in order to obtain a lower bound on $\OPT(\bc(\cdot))$. The $0$-th phase starts at $t_0 =0$ and for $i \in \Nbb$, the $r$-th phase begins at time $t_i$ and ends right before $t_{i+1}$ which is defined as the first instant satisfying that $\forall \rho \in \Xcal: \sum_{t_i \leq t < t_{i+1}}c_\rho(t) \geq 1$. Without loss of generality, we can assume for simplicity that there always exists a state $\rho_i$ such that $\sum_{t_{i}\leq t< t_{i+1}} c_{\rho_i}(t)=1$ (see the continuous-time variant of metrical task systems, as in \cite{bubeck2021metrical}, for a justification). We denote by $i^*$ the total number of phases, which only depends on the cost sequence $\bc(\cdot)$. We observe that $\OPT(\bc(\cdot)) \geq i^*$, because the cost incurred by any fixed trajectory is at least one during a given phase, and the cost of any mobile trajectory is also at least one.  

The idea of \cite{borodin1992optimal} is to define a randomized algorithm that incurs a cost of at most $2H_n$ in each phase, and which is thus $2H_n$-competitive. In a given phase $i$, for any $t\in[t_i,t_{i+1})$, the strategy consists in assigning $\bx(t)$ to be the uniform probability distribution over the states in $\{\rho \in \Xcal : \sum_{t_i \leq \tau \leq t}c_\rho(\tau)\leq 1\}$, which are the non-saturated states. This strategy suffers a movement cost of at most $1+1/n+1/(n-1)\dots +1/2 =H_n$, where $H_n$ denotes the $n$-th harmonic number. Note that the service cost of this strategy is less or equal to its movement cost, so it is $2H_n$-competitive.  Ufortunately, the strategy is not $n$-barely fractional, because it takes values in $\{1/n, 1/(n-1),\dots, 1\}$ which are not all multiples of $1/n$.

We consider a simple discretization of this strategy, which does not suffer from the caveat described in Section \ref{sec:main}. This descritization relies on the decomposition in phases which is specific to the uniform metric space. It can be seen as an application of the analysis of the `game of balls and urns' which was used to design a collective tree exploration algorithm in \cite{cosson2023efficient}. We recall the definition of the game. The game starts with $n$ balls, each located in one of $n$ urns. In discrete rounds, an adversary picks an urn that contains a ball and the player has to move one ball in that urn to any other urn. The game ends when all urns have been chosen at least once by the adversary, and the cost of the player is the number of rounds before the end of the game, re-normalized by $1/n$. The simple strategy for the player that consists of relocating selected balls to the least loaded urn is known to induce a cost of at most $\ln(n)+3 \leq H_n+3$. Observe that this game models exactly the movement cost a corresponding $n$-barely fractional strategy in a given phase (where each ball represents a probability mass of $1/n$ and each urn represents a point of the uniform metric space). Also, since the service cost is (again) bounded by the movement cost in a given phase, the strategy is $2H_n+6$-competitive. \end{proof}

\subsection{Lower-bound on the number of random bits}
We now explain why $\Omega(\log n)$ random bits are required to obtain an asymptotically optimal competitive ratio. We show the stronger result that a $k$-barely random algorithm for metrical task systems is at most $(2n-1)/k$-competitive. Thus, a $\Ocal(\log^2 n)$-competitive strategy must have $k\geq cst \times n/\log^2 n$ -- corresponding to a lower-bound of $\log k \geq \log n - \Ocal(\log \log n)$ in the number of random bits.
\begin{proposition}
    The competitive ratio of $k$-barely random algorithms for metrical task systems is at least $(2n-1)/k.$ 
\end{proposition}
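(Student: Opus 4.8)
The plan is to combine two ingredients: a $k$-barely random algorithm is nothing but a uniform mixture of $k$ deterministic online algorithms, and the deterministic competitive ratio of metrical task systems on any metric space with $n$ points is at least $2n-1$, as established by \cite{borodin1992optimal}. The non-negativity of movement and service costs is what lets these two facts combine cleanly.

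Concretely, let $\Acal$ be an arbitrary $k$-barely random algorithm, so $\Dcal = \Ucal(\{1,\dots,k\})$. For each fixed seed $i\in\{1,\dots,k\}$, the map $\bc \mapsto (\Acal(c(1),\dots,c(t),i))_{t\geq 1}$ is a bona fide deterministic online algorithm, which I denote $\Acal_i$, and for every cost sequence $\bc$,
\[
\Ebb_{\br\sim\Dcal}\big(\Cost(\brho^{\Acal},\bc)\big) = \frac1k\sum_{i=1}^k \Cost(\brho^{\Acal_i},\bc) \;\geq\; \frac1k\,\Cost(\brho^{\Acal_1},\bc),
\]
the inequality coming solely from dropping the (non-negative) contributions of $\Acal_2,\dots,\Acal_k$.

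Next I would invoke the deterministic lower bound. Fix any $\alpha < 2n-1$ and any additive constant $\beta$. Applied to the single deterministic algorithm $\Acal_1$, the construction of \cite{borodin1992optimal} produces a cost sequence $\bc$ with $\Cost(\brho^{\Acal_1},\bc) > \alpha\,\OPT(\bc) + k\beta$; crucially no adaptivity is needed, since $\Acal_1$ is deterministic and hence fully predictable, so this adversarial sequence can be fixed offline. Plugging this into the display above yields $\Ebb_{\br}\big(\Cost(\brho^{\Acal},\bc)\big) > \tfrac{\alpha}{k}\OPT(\bc) + \beta$. Since $\alpha < 2n-1$ and $\beta$ were arbitrary, for every $\alpha' < (2n-1)/k$ and every $\beta'$ one finds (taking $\alpha = k\alpha'$, $\beta = \beta'$) a sequence violating $\alpha'$-competitiveness; hence the competitive ratio of $\Acal$ is at least $(2n-1)/k$, as claimed.

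There is essentially no hard step here; the only point requiring care is the quantifier structure of the statement ``competitive ratio at least $(2n-1)/k$'', namely that one must defeat every pair $(\alpha',\beta')$ with $\alpha' < (2n-1)/k$ — which is precisely why the additive slack $k\beta$ is fed into the deterministic lower bound \emph{before} dividing by $k$, rather than afterward.
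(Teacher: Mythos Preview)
Your proof is correct and follows essentially the same approach as the paper: fix one seed to obtain a deterministic algorithm, invoke the $2n-1$ deterministic lower bound of \cite{borodin1992optimal} against it, and then use non-negativity of the other seeds' costs to divide by $k$. If anything, your treatment of the additive constant $\beta$ is more careful than the paper's, which simply writes $\Cost(\brho^{\Acal(\cdot,1)},\bc)\geq (2n-1)\OPT(\bc)$ without making the quantifier structure explicit.
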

\begin{proof} Consider a barely random algorithm $\Acal(\cdot,\br)$, with $\br \sim \Dcal = \Ucal(\{1,\dots,k\})$. 
    The deterministic algorithm $\Acal(\cdot,1)$ is sampled with probability $1/k$. By the $2n-1$ lower-bound of \cite{borodin1992optimal} on the competitive ratio of deterministic algorithms for metrical task systems, there exists a sequence $\bc(\cdot)\in \Rbb_+^\Nbb$ such that $\Cost(\brho^{\Acal(\cdot,1)}(\cdot),\bc(\cdot))\geq (2n-1)\OPT(\bc(\cdot))$. Integrating over the seed $\br \sim \Ucal(\{1,\dots,k\})$, we get 
    $\Ebb_{\br}(\Cost(\brho^{\Acal(\cdot,\br)}(\cdot),\bc(\cdot))) \geq \frac{2n-1}{k}\OPT(\bc(\cdot))$.
\end{proof}

The question of whether an order-optimal randomized competitive ratio can be achieved with $k=O(n)$ instead of $k\geq n^2$ remains open on general metric spaces. This seems plausible for weighted star metrics, which admit a $\Ocal(\log n)$ competitive strategies via an algorithm working in phases \cite{buchbinder2009design}. 
It is also possible that some metric spaces do require $k=\Omega(n^2)$. Demonstrating such a cutoff for barely random algorithms would nicely echo the cutoff between $\Theta(\log n)$ and $\Theta(\log^2 n)$ that seems to appear in the competitive ratio.

\subsection{Conclusion and open directions for collective algorithms}
The term `collective algorithm' used in this paper is borrowed from the literature on distributed algorithms \cite{fraigniaud2006collective,cosson2024collective}. We believe that highlighting situations in which a team of agents can be competitive/intelligent in a way that no single agent could be is an interesting research direction. As presented in this paper, such situations echo the well-studied competitive gap between randomized and deterministic algorithms for online problems \cite{ben1994power}. But collective algorithms come with considerations that go beyond their analogy with barely-random algorithms. For instance, one can study the competitiveness of collective algorithms under: restricted communications (such as the write-read model of \cite{fraigniaud2006collective} or the local model of \cite{dereniowski2015fast}), synchronous movements (as discussed in \cite{cosson2024breaking}), capacity constraints (where the amount of agents at a position is limited  \cite{bansal2022online,cosson2023efficient}) or even with agents having some degree of selfishness or maliciousness (then involving a price of anarchy \cite{koutsoupias1999worst} for selfish agents, or some degree of robustness to byzantine faults \cite{lamport1982byzantine} for malicious agents).

\paragraph{Acknowledgments} 
The authors thank the anonymous reviewers for their helpful comments. RC thanks 
Spyros Angelopoulos, 
Nikhil Bansal,
Moïse Blanchard,
Christian Coester 
and Christoph Dürr 
for discussions. This work was supported by PRAIRIE ANR-19-P3IA-0001.

\bibliography{biblio}

\end{document}